\def\dis{\displaystyle}
\def\R{\mathbb R}
\newtheorem{Corollary}{Corollary}
\newtheorem{Theorem}{Theorem}
\newtheorem{proof}{Proof}
\begin{document}

\markboth{Agaoglou,Fe{\v c}kan, Posp\'i\v sil, Rothos \& Vakakis}{Periodically 
Forced Nonlinear Oscillatory Acoustic~Vacuum}

\title{Periodically 
Forced Nonlinear Oscillatory Acoustic~Vacuum}

\author{Makrina Agaoglou}
\address{Mathematical Institute of Slovak Academy of Sciences, \v{S}tef\'anikova 49, 814 73 Bratislava, Slovakia}
\address{Department of Mechanical Engineering, Faculty of Engineering, Aristotle University of Thessaloniki, 54124 Thessaloniki, Greece}

\author{Michal Fe\v{c}kan}
\address{Department of Mathematical Analysis and Numerical Mathematics, Comenius University in Bratislava,\\ Mlynsk\'a dolina, 842 48 Bratislava, Slovakia}
\address{Mathematical Institute of Slovak Academy of Sciences, \v{S}tef\'anikova 49, 814 73 Bratislava, Slovakia}

\author{Michal Posp\' i\v sil}
\address{Department of Mathematical Analysis and Numerical Mathematics, Comenius University in Bratislava,\\ Mlynsk\'a dolina, 842 48 Bratislava, Slovakia}
\address{Mathematical Institute of Slovak Academy of Sciences, \v{S}tef\'anikova 49, 814 73 Bratislava, Slovakia}

\author{Vassilis Rothos}
\address{Department of Mechanical Engineering, Faculty of Engineering, Aristotle University of Thessaloniki, 54124 Thessaloniki, Greece}

\author{Alexander F. Vakakis}
\address{Department of Mechanical Science and Engineering, University of Illinois at Urbana-Champaign, Urbana,IL 61801, USA}

\begin{abstract}
    In this work, we study the in-plane oscillations of a finite lattice of particles coupled by linear springs under distributed harmonic excitation. Melnikov-type analysis is applied for the persistence of periodic oscillations of a reduced system.
\end{abstract}

\keywords{nonlinear oscillatory acoustic vacuum; periodic oscillations; Melnikov function;symmetry}
 
\maketitle
%



\section{Introduction}\label{sec1}

We analytically study the persistence of periodic oscillations for certain three-dimensional systems of ordinary differential equations (ODEs) with periodic perturbations and a slowly-varying variable. The considered ODEs are derived from a model of a finite lattice of particles coupled by linear springs under distributed harmonic excitation, which is described in detail in Section~\ref{sec2}. This model presents a low-energy nonlinear acoustic vacuum. We refer the reader for more motivations, further details and applications to~\cite{vakakis1,ZMSBV}. Following the computations of~\cite{vakakis1}, we consider just two modes in Section~\ref{sec3}, and we postpone higher modes investigation to our future paper, since another approach will be used. Melnikov analysis is demonstrated in Section~\ref{sec4} for finding conditions for the existence of periodic solutions for the perturbed ODEs corresponding to two modes. {More precisely, following~\cite{vakakis1}, we derive a three-dimensional periodically-perturbed system of ODEs with a slowly-varying variable. Then, we analyze an unperturbed autonomous system of ODEs to compute its family of periodic solutions by revising the results of~\cite{vakakis1} in more detail. Since we are interested in the persistence of periodic solutions for the perturbed ODEs, we compute the corresponding Melnikov functions by~\cite{WH}. Due to the difficulty of finding simple roots of these Melnikov functions explicitly, we outline an asymptotic approach for the location of some of them. Note that the simple roots of Melnikov functions predict the persistence and location of periodic solutions for perturbed ODEs. This is a novelty and a contribution of our paper.} Section~\ref{sec5} outlines possible future research along with summarizing our achievements in this paper.

\section{The Model}\label{sec2}

We consider a lattice consisting of $N$ identical particles coupled by identical linear springs (they~are un-stretched when the lattice is in the horizontal position) and executing in-plane oscillations (see~Figure~\ref{figure1}). Fixed boundary conditions and dissipative terms are imposed, and the transverse harmonic forces are also applied. The equations of motion can be expressed as follows,
\begin{equation}\label{eq.1}
\begin{array}{lll}
\dis m\frac{d^{2}u_{i}}{dt^{2}}+\left(T_{i} -\xi \frac{d\epsilon_i}{dt}\right)\cos
\phi_{i}-\left(T_{i+1}-\xi \frac{d\epsilon_{i+1}}{dt}\right)\cos \phi_{i+1}&=&0\\ [1.5ex]
\dis m\frac{d^{2}v_{i}}{dt^{2}}+\left(T_{i}-\xi \frac{d\epsilon_i}{dt}\right)\sin
\phi_{i}-\left(T_{i+1}-\xi \frac{d\epsilon_{i+1}}{dt}\right)\sin \phi_{i+1}-F_{i}
&=&0
\end{array}
\end{equation}
with $u_{i},v_{i}$ being the longitudinal and transversal displacements
of the $i$-th particle, respectively, $\phi_{i}$ the angle between the $i$-th spring and the horizontal direction, $\xi$ the damping coefficient,
$\epsilon _{i}=l'_{i}-l_{i}$ the deformation of the $i$-th spring,
$F_{i}$ the exciting transverse force and $m$ the mass of each particle
of the lattice. The tensile forces are proportional to the deformations of
the springs, and considering the geometry of the
deformed state of the lattice (see Figure~\ref{figure1}), one may write:
\begin{equation}
\begin{array}{lll}
\dis T_{i}&=&k(l'_{i}-l_{i}),\\
\dis \epsilon_{i}=l'_{i}-l_{i}&=&[(v_{i}-v_{i-1})^{2}+(l_{i}+u_{i}-u_{i-1})^{2}]^{1/2}-l_{i}
\end{array}
\end{equation}
with $l_{i}$ being the equilibrium length of the $i$-th spring (each spring has the same length) and $k$ the linear stiffness coefficient of each coupling spring. Introducing $\delta_{i}=\epsilon_{i}/l_{i}$, $s_{i}=u_{i}/l_{i}$, $w_{i}=v_{i}/l_{i}$, $c=\xi /(km)^{1/2}$, where $s_{i}$ and $w_{i}$ are the normalized axial and transverse displacements, and the ``slow'' time scale $\tau =(\frac{k}{m})^{1/2}t$, Equation (\ref{eq.1}) can be rewritten in normalized form:
\begin{equation}\label{eq.3}
\begin{array}{lll}
\dis \frac{d^{2}s_{i}}{d\tau^{2}}&=&\delta_{i}\cos \phi_{i}-\delta_{i-1}\cos \phi_{i-1}+c\delta'_{i}\cos \phi_{i}-c\delta'_{i-1}\cos \phi_{i-1}\\ [1.5ex]
\dis \frac{d^{2}w_{i}}{d\tau^{2}}&=&\delta_{i}\sin \phi_{i}-\delta_{i-1}\sin \phi_{i-1}+c\delta'_{i}\sin \phi_{i}-c\delta'_{i-1}\sin \phi_{i-1}+f_{i},
\end{array}
\end{equation}
where:
\begin{equation}
\begin{array}{rll}
\dis \delta_{i}&=&[(w_{i+1}-w_{i})^{2}+(1+s_{i+1}-s_{i})]^{1/2}-1,\\ [1.5ex]
\dis \delta'_{i}&=&\frac{(w_{i+1}-w_{i})(w_{i+1}'-w_{i}')+(1+s_{i+1}-s_{i})(s_{i+1}'-s_{i}')}{[(w_{i+1}-w_{i})^{2}+(1+s_{i+1}-s_{i})^{2}]^{1/2}}
\end{array}
\end{equation}
and:
\begin{equation}
\begin{array}{rrl}
\dis \cos \phi_{i}&=&\frac{1+s_{i+1}-s_{i}}{[(w_{i+1}-w_{i})^{2}+(1+s_{i+1}-s_{i})^{2}]^{1/2}},\\ [1.5ex]
\dis \sin \phi_{i}&=&\frac{w_{i+1}-w_{i}}{[(w_{i+1}-w_{i})^{2}+(1+s_{i+1}-s_{i})^{2}]^{1/2}},\\ [1.5ex]
\dis f_i&=&\frac{F_i}{k}.
\end{array}
\end{equation}

\begin{figure}[H]
\begin{center}
{\includegraphics[width=0.7\textwidth]{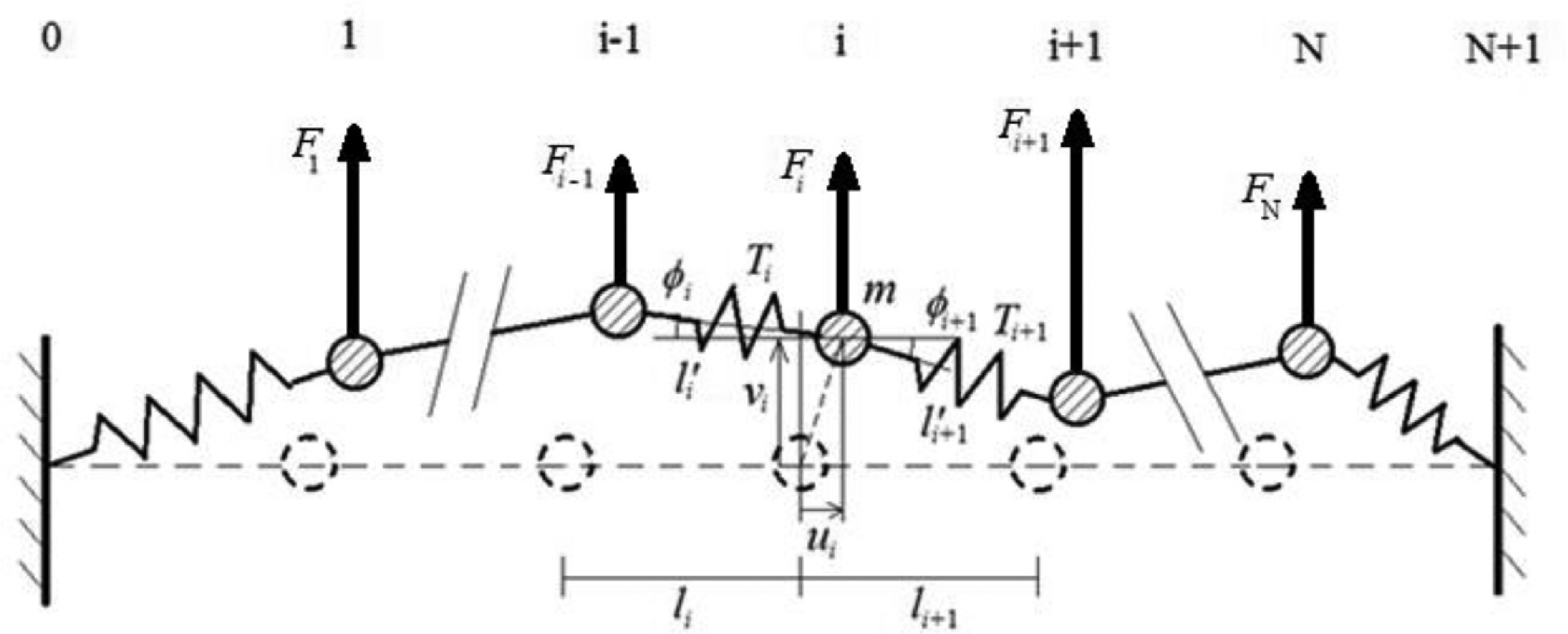}}
\end{center}
\caption{\normalsize Forced and damped lattice oscillating in the plane (see~\cite{ZMSBV})}.\label{figure1}
\end{figure} 

The normalized system (\ref{eq.3}) is referred to as the ``exact lattice'' in the following sections.

According to the previous research~\cite{vakakis1}, when we introduce this system \eqref{eq.3} without extra transverse force and damping terms, an interesting feature is that in the low energy limit and under the assumption that the axial displacements are assumed to be an order of magnitude smaller compared to the transverse ones, it was shown that, correct to the leading order of approximation, the transverse oscillations decouple from the axial ones and are governed by the following reduced system of equations for predominantly transverse oscillations of the particles:
\begin{equation}\label{eq.6}
\begin{array}{ccc}
\dis \frac{d^{2}w_{i}}{d\tau^{2}}+2^{-1}(N+1)^{-1}\sum_{q=0}^{N}(w_{q+1}-w_{q})^{2}(2w_{i}-w_{i+1}-w_{i-1})=0,\\
\dis i=1,\cdots,N,\quad w_{0}(0)=w_{N+1}(0)=0.
\end{array}
\end{equation}

Then, the nearly linear axial oscillations are driven by the transverse responses (see~\cite{vakakis1,ZMSBV} for more details). Therefore, we focus our analysis like in~\cite{vakakis1} just on Equation~(\ref{eq.6}), which presents \hl{\em a low-energy nonlinear acoustic vacuum}, because in the absence of linear terms, it possesses zero speed of sound in the 
context of classical linear acoustics. What is more, it is notable that the existence of the strongly nonlocal multiplier $2^{-1}(N+1)^{-1}\sum_{q=0}^{N}(w_{q+1}-w_{q})^{2}$ indicates that the response of each particle is dependent (and hence, it is coupled) on the responses of all other particles. Equation~(\ref{eq.6}) admits \emph{N} exact nonlinear standing waves, or nonlinear normal modes (NNMs), in the form:
$$
w_{i}(\tau)=A_{p}(\tau) \sin \frac{\pi pi}{N+1},\quad i=1,\cdots,N
$$
for the $p$-th NNM, $1\leq p\leq N$, where $A_{p}(\tau)$ denotes the $p$-th modal amplitude. These, by construction, are mutually orthogonal, and there are no other NNMs in this system, nor any NNM bifurcations~\cite{vakakis1}. Substituting this NNM ansatz into Equation~(\ref{eq.6}) yields a set of $N$ uncoupled nonlinear equations governing the time-dependent amplitudes of the NNMs:
$$
A''_{p}(\tau)+(1/4)\omega_{p}^{4}A_{p}^{3}(\tau)=0
$$
with:
$$
\omega_{p}=2\sin\frac{\pi p}{2(N+1)},
$$
which is the $p$-th natural frequency of the corresponding linear system Equation~(\ref{eq.6}) and the prime denoting differentiation with respect to $\tau$. The exciting force, which is applied on each particle in the transverse direction, is expressed as:
$$
f_{i}=F_{p}\cos \omega_{p}\tau\sin \frac{pi\pi}{n+1}
$$
where $i=1,\cdots,N$, for the $p$-th NNM, $1\leq p\leq N$, and this exciting force includes NNMs in the form $\sin \frac{pi\pi}{n+1},\ i=1,\cdots,N,$ for the $p$-th NNM, $1\leq p\leq N$ and the $p$-th natural linear frequency $\omega_{p}$.

The frequency of the $p$-th NNM is tunable with the force and energy, and it also paves the way for nonlinear resonances between NNMs widely separated in the nonlinear spectrum, given that their energies tune their frequencies to satisfy certain rational relationships.

Summarizing, we can write \eqref{eq.6} as:
$$
\frac{d^{2}w}{d\tau^{2}}+2^{-1}(N+1)^{-1}\langle Mw,w\rangle Mw=0,
$$
where $w=[w_1,\cdots,w_N]\in\R^N$, $M$ is a symmetric matrix given by
$$
M=\begin{pmatrix} 2 & -1 & 0 & \cdots & 0\\
        -1 & 2 & -1& \cdots & 0\\
        \vdots & \vdots & \vdots & \vdots & \vdots \\
         0 & \cdots & -1 & 2 & -1\\
         0 &\cdots & 0 & -1 & 2
\end{pmatrix}
$$
and $\langle \cdot,\cdot\rangle$ is the standard scalar product on $\R^N$. The eigenvectors of $M$ are $\underline{\phi}_p=[\sin \frac{p\pi}{N+1},\cdots,\sin \frac{pN\pi}{N+1}]$ with the corresponding eigenvalues $\omega_p^2$, $1\le p\le N$. Moreover, it holds (see~\cite{GR}, p. 37):
$$
\begin{gathered}
\langle \underline{\phi}_p,\underline{\phi}_p\rangle=\sum_{i=1}^N\sin^2\frac{pi\pi}{N+1}=\frac{N+1}{2},\\
\langle \underline{\phi}_p,\underline{\phi}_k\rangle=\sum_{i=1}^N\sin\frac{pi\pi}{N+1}\sin\frac{ki\pi}{N+1}\\
=\frac{1}{2}\sum_{i=1}^N\left(\cos\frac{(p-k)i\pi}{N+1}-\cos\frac{(p+k)i\pi}{N+1}\right)=0,\quad p\ne k.
\end{gathered}
$$

The forced \eqref{eq.6} has the form:
\begin{equation}\label{eq.6b}
\frac{d^{2}w}{d\tau^{2}}+2^{-1}(N+1)^{-1}\langle Mw,w\rangle Mw=\sum_{p=1}^NF_p\cos \omega_{p}\tau\underline{\phi}_p.
\end{equation}

Therefore, considering the basis $\{\underline{\phi}_p\}_{p=1}^N$ of $\R^N$, we take $w(\tau)=\sum_{p=1}^NC_p(\tau)\underline{\phi}_p$ in \eqref{eq.6b} to get:
\begin{equation}\label{eq.6c}
C''_p(\tau)+\frac{1}{4}\left(\sum_{i=1}^NC_i^2(\tau)\omega_i^2\right)\omega_p^2C_p(\tau)=F_{p}\cos \omega_{p}\tau,\quad 1\le p\le N.
\end{equation}

Next, applying the coordinate transformation to \eqref{eq.6c}:
$$
A_{p}(\tau)=\frac{\omega_p}{2}C_{p}(\tau),\quad 1\le p\le N,
$$
we get:
$$
A''_p(\tau)+\left(\sum_{i=1}^NA_i^2(\tau)\right)\omega_p^2A_p(\tau)=\frac{F_{p}\omega_p}{2}\cos \omega_{p}\tau,\quad 1\le p\le N.
$$

\section{{Two}-Mode System}\label{sec3}
In this paper, we consider just two modes: $k$ and $p$, so we study the system:
\begin{equation}\label{eq.8}
\begin{array}{lll}
\dis A''_{k}(\tau)+[A^{2}_{k}(\tau)+A_{p}^{2}(\tau)]\omega_{k}^{2}A_{k}(\tau)+\epsilon\mu_1\cos (\omega_{k}\tau)&=&0\\ [1.5ex]
\dis A''_{p}(\tau)+[A^{2}_{k}(\tau)+A_{p}^{2}(\tau)]\omega_{p}^{2}A_{p}(\tau)+\epsilon\mu_2\cos (\omega_{p}\tau)&=&0,
\end{array}
\end{equation}
for $\epsilon\ne0$ small and parameters $\mu_1$, $\mu_2$. Using the transformation:
\begin{equation*}
\begin{array}{lll}
\dis \psi _{1}(\tau)&=&A_{k}'(\tau)+j\Omega A_{k}(\tau)\equiv \zeta_{1}(\tau )e ^{j\Omega \tau}\\ [1.5ex]
\dis \psi _{2}(\tau)&=&A_{p}'(\tau)+j\Omega A_{p}(\tau)\equiv \zeta_{2}(\tau )e ^{j\Omega \tau},
\end{array}
\end{equation*}
and performing an averaging approach like in~\cite{vakakis1}, Equation~(\ref{eq.8}) is modified to the form:
\begin{equation*}
\begin{array}{lll}
\dis \zeta_{1}'+\frac{j\Omega}{2}\zeta_{1}-\frac{j\omega_{k}^{2}}
{8\Omega ^{3}}(\zeta_{2}^{2}\zeta_{1}^{\ast}+2|\zeta_{2}|^{2}\zeta_{1}+3|
\zeta_{1}|^{2}\zeta_{1})+\epsilon\mu_1\cos (\omega _{k}\tau)e ^{-j\Omega \tau}
&=&0 \\ [1.5ex]
\dis \zeta_{2}'+\frac{j\Omega}{2}\zeta_{2}-\frac{j\omega_{p}^{2}}
{8\Omega ^{3}}(\zeta_{1}^{2}\zeta_{2}^{\ast}+2|\zeta_{1}|^{2}\zeta_{2}+3|
\zeta_{2}|^{2}\zeta_{2})+\epsilon\mu_2\cos (\omega _{p}\tau)e ^{-j\Omega \tau}
&=&0.
\end{array}
\end{equation*}

Introducing $\zeta_{i}=a_{i}e ^{j\beta _{i}}$ and $\Delta =\beta_{2}-\beta_{1}$, we get:
\begin{equation}\label{eq.28}
\begin{array}{rrl}
\dis a_{1}'+\frac{\omega_{k}^{2}}{8\Omega^{3}}a_{2}^{2}a_{1}\sin 2\Delta+\epsilon\mu_1\cos (\omega_{k}\tau)\cos(\Omega \tau +\beta _{1})&=& 0\\ [1.5ex]
\dis a_{2}'-\frac{\omega_{p}^{2}}{8\Omega^{3}}a_{1}^{2}a_{2}\sin 2\Delta+\epsilon\mu_2\cos (\omega_{p}\tau)\cos(\Omega \tau +\beta _{1}+\Delta)&=& 0\\ [1.5ex]
\dis \Delta '-\frac{\omega_{p}^{2}}{8\Omega^{3}}(3a_{2}^{2}+a_{1}^{2}\cos 2\Delta +2a_{1}^{2})+\frac{\omega_{k}^{2}}{8\Omega^{3}}(3a_{1}^{2}+a_{2}^{2}\cos 2\Delta +2a_{2}^{2})\\ [1.5ex]
\dis -\frac{\epsilon\mu_2}{a_{2}}\cos (\omega_{p}\tau)\sin (\Omega \tau +\beta _{1}+\Delta)
+\frac{\epsilon\mu_1}{a_{1}}\cos (\omega_{k}\tau)\sin (\Omega \tau +\beta _{1})&=&0
\end{array}
\end{equation}
where we consider $\beta_1$ as a constant parameter. Now, by introducing the coordinate transformations $a_{1}=(\frac{\rho}{\omega_{p}})\sin \theta$ and $a_{2}=(\frac{\rho}{\omega_{k}})\cos \theta$ into Equation~(\ref{eq.28}), we get:
$$
\begin{array}{rrl}
\dis \rho '=-\epsilon\mu_1\omega _{p}\sin \theta \cos(\omega_{k}\tau)\cos (\Omega \tau +\beta _{1})-\epsilon\mu_2\omega_{k}\cos\theta \cos(\omega_{p}\tau)\cos (\Omega \tau +\Delta +\beta _{1})\\ [1.5ex]
\dis \theta '+\frac{\rho ^{2}}{16\Omega ^{3}}\sin 2\theta \sin 2\Delta+ \epsilon\mu_1\omega_{p}\frac{\cos \theta}{\rho}\cos (\omega _{k}\tau)\cos (\Omega \tau +\beta _{1})\\ [1.5ex]
\dis -\epsilon\mu_2\omega_{k}\frac{\sin\theta}{\rho}\cos(\omega_{p}\tau)\cos (\Omega \tau +\Delta +\beta _{1})=0\\ [1.5ex]
\dis \Delta '-\frac{\rho ^{2}}{8\Omega^{3}}\left[\frac{3\omega_{p}^{2}}{\omega_{k}^{2}}\cos ^{2}\theta -\frac{3\omega_{k}^{2}}{\omega_{p}^{2}}\sin^{2}\theta-\cos 2\theta (2+\cos 2\Delta)\right]\\ [1.5ex]
\dis -\frac{\epsilon\mu_2\omega_{k}}{\rho \cos \theta}\cos (\omega _{p}\tau)\sin (\Omega \tau +\Delta +\beta _{1})+\frac{\epsilon\mu_1\omega_{p}}{\rho \sin \theta}\cos (\omega _{k}\tau)\sin (\Omega \tau +\beta _{1})=0.
\end{array}
$$

In the rest of the paper, we assume $\omega_p=\omega_k=P$ and $\Omega=kP$ for a natural number $k$, so we study the periodically-perturbed system:
\begin{equation}\label{eq.29}
\begin{array}{rrl}
\dis \rho '=-\epsilon\mu_1\sin \theta \cos(P\tau)\cos (kP\tau +\beta _{1})-\epsilon\mu_2\cos\theta \cos(P\tau)\cos (kP\tau +\Delta +\beta _{1})\\ [1.5ex]
\dis \theta '+\frac{\rho ^{2}}{16k^3P^{3}}\sin 2\theta \sin 2\Delta+ \epsilon\mu_1\frac{\cos \theta}{\rho}\cos (P\tau)\cos (kP\tau +\beta _{1})\\ [1.5ex]
\dis -\epsilon\mu_2\frac{\sin\theta}{\rho} \cos(P\tau)\cos (kP\tau +\Delta +\beta _{1})=0\\
\dis \Delta '-\frac{\rho ^{2}}{4k^3P^{3}}\cos 2\theta \sin ^{2}\Delta+\frac{\epsilon\mu_1}{\rho \sin \theta}\cos (P\tau)\sin (kP\tau +\beta _{1})\\ [1.5ex]
\dis -\frac{\epsilon\mu_2}{\rho \cos \theta}\cos (P\tau)\sin (kP\tau +\Delta +\beta _{1})=0
\end{array}
\end{equation}
where we scaled $P\mu_i\leftrightarrow \mu_i$, $i=1,2$. We may suppose:
$$
\mu_1^2+\mu_2^2=1.
$$

First, consider the unperturbed case where $\epsilon=0$, so the system:
\begin{equation}\label{eq.29a}
\begin{array}{rrl}
\dis \rho '=0\\
\dis \theta '+\frac{\rho ^{2}}{16k^3P^{3}}\sin 2\theta \sin 2\Delta=0\\ [1.5ex]
\dis \Delta '-\frac{\rho ^{2}}{4k^3P^{3}}\cos 2\theta \sin ^{2}\Delta=0.
\end{array}
\end{equation}

By introducing the temporal variable $\tau_{2}=\frac{\rho^{2}}{8k^3P^{3}}\tau$ in Equation~(\ref{eq.29a}), we get:
\begin{equation}\label{eq.32}
\begin{array}{rrr}
\dis \frac{d\theta}{d\tau_{2}}&=&-\frac{1}{2}\sin 2\theta \sin 2\Delta \\ [1.5ex]
\dis \frac{d\Delta}{d\tau_{2}}&=&2\cos 2\theta \sin ^{2}\Delta
\end{array}
\end{equation}
which is fully integrable and gives us the first integral $I=\sin 2\theta \sin \Delta =K=const.$ of the degenerate slow flow. If we consider the following initial conditions: $\theta (0)=\theta _{0}$, where \mbox{$0<\theta_{0}<\frac{\pi}{4}$} \mbox{and~$\Delta (0)=\pi /2$}, then we get $K=\sin 2\theta _{0}\in(0,1)$. To find exact solutions of \eqref{eq.32}, we first derive:
$$
\begin{gathered}
\frac{d\tau_{2}}{d\theta}=-\frac{2}{\sin 2\theta \sin 2\Delta}=-\frac{1}{K\cos\Delta}=\frac{1}{K\sqrt{1-\sin^2\Delta}}\\
=\frac{1}{K\sqrt{1-\frac{K^2}{\sin^22\theta}}}=\frac{1}{K}\frac{\sin2\theta}{\sqrt{\sin^22\theta-K^2}},
\end{gathered}
$$
for $\tau_2>0$ small, since \eqref{eq.32} gives $0<\theta<\frac{\pi}{2}$, $0<\Delta<\pi$ (see Figure~\ref{figure2}) and $\frac{d\Delta}{d\tau_{2}}(0)=2\cos 2\theta_0>0$, so~for $\tau_2>0$ small, we have $\Delta(\tau_2)>\frac{\pi}{2}$.

By using the formula~in \cite{GR} ((2.599.4) p. 205), we obtain:
$$
\tau_2=\int_{\theta_0}^\theta\frac{1}{K}\frac{\sin2\theta}{\sqrt{\sin^22\theta-K^2}}d\theta=\frac{1}{2K}\left(\frac{\pi}{2}-\sin ^{-1}\left(\frac{\cos 2\theta}{\sqrt{1-K^2}}\right)\right),
$$
which gives:
\begin{equation}\label{sol1}
\theta(\tau_2)=\frac{1}{2}\cos^{-1}\left(\sqrt{1-K^2}\cos(2K\tau_2)\right),
\end{equation}
recalling $0<\theta<\frac{\pi}{2}$. Of course, Formula \eqref{sol1} holds for all $\tau_2$, not just for small positive ones. Next, using \eqref{eq.32} and \eqref{sol1}, we have:
$$
\frac{d\Delta}{d\tau_{2}}=2\sqrt{1-K^2}\cos(2K\tau_2)\sin ^{2}\Delta,
$$
which can be easily solved to arrive at:
$$
\Delta(\tau_2)=\pi-\cot^{-1}\left(\frac{\sqrt{1-K^2}}{K}\sin(2K\tau_2)\right).
$$

\begin{figure}[H]
\centering
\includegraphics[width=10cm]{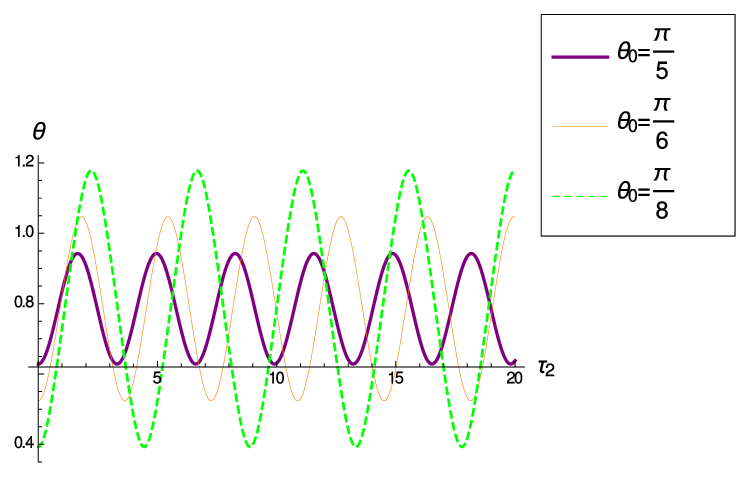} \\ \vspace{-6pt}
\includegraphics[width=10cm]{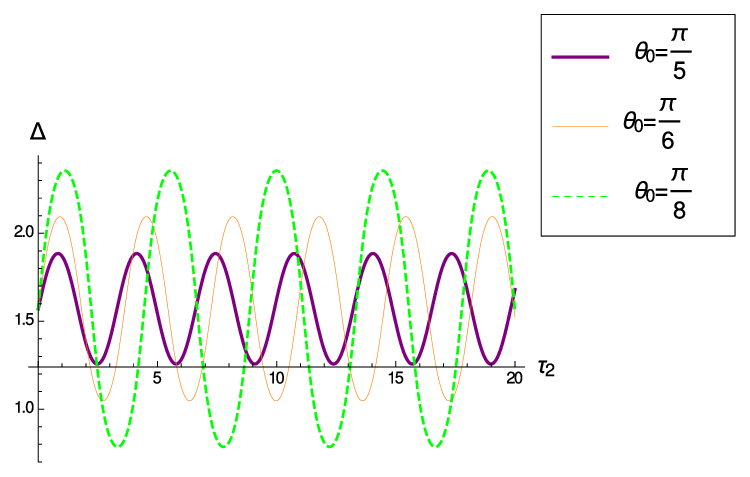}
\caption{{Top panel}: Graph for $\theta$ for different initial values of $\theta _{0}$.
{Bottom panel}: Graph for $\Delta$ for different initial values of $\theta _{0}$.}
\label{figure2}
\end{figure}

Hence, the exact solution of the system (\ref{eq.32}) is given as:
$$
\begin{gathered}
\theta(\tau_2)=\frac{1}{2}\cos^{-1}\left(\sqrt{1-K^2}\cos(2K\tau_2)\right)\\
\Delta(\tau_2)=\pi-\cot^{-1}\left(\frac{\sqrt{1-K^2}}{K}\sin(2K\tau_2)\right)
\end{gathered}
$$
where the period is:
\begin{equation*}
T(\theta _{0})=\frac{2}{K}\int_{\theta _{0}}^{\frac{\pi}{2}-
\theta_{0}} \frac{d\theta}{\left(1-\frac{\sin ^{2}2\theta_{0}}{\sin ^{2}
2\theta}\right)^{1/2}}= \frac{\pi}{K}.
\end{equation*}

It is easy to verify the following symmetry property (see Figure~\ref{figure3}):
$$
\Delta\left(\tau_2+\frac{T}{2}\right)=\pi-\Delta(\tau_2),\quad \theta\left(\tau_2+\frac{T}{2}\right)=\frac{\pi}{2}-\theta(\tau_2).
$$

\begin{figure}[H]
\begin{center}
{\includegraphics[width=0.3\textwidth]{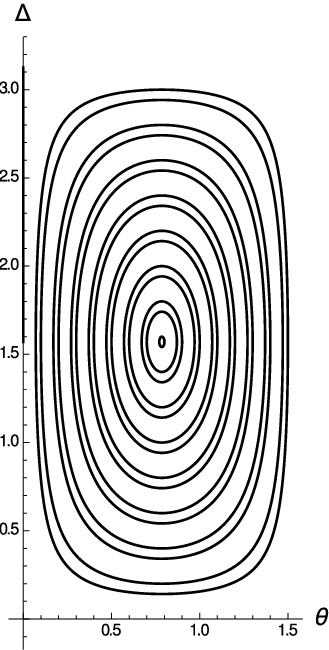}}
\end{center} \vspace{-6pt}
\caption{Orbits in the phase portrait of \eqref{eq.32}, where $(\theta,\Delta)\in [0,\frac{\pi}{2}]\times [0,\pi]$.}\label{figure3} 
\end{figure}

Summarizing, the exact solution of the unperturbed \eqref{eq.29a} is the following:
\begin{equation*}
\begin{gathered}
\rho(\tau)=const.\\
\theta(\tau)=\frac{1}{2}\cos^{-1}\left(\sqrt{1-K^2}\cos\left(\frac{K\rho^{2}}{4k^3P^{3}}\tau\right)\right)\\
\Delta(\tau)=\pi-\cot^{-1}\left(\frac{\sqrt{1-K^2}}{K}\sin\left(\frac{K\rho^{2}}{4k^3P^{3}}\tau\right)\right)
\end{gathered}
\end{equation*}
with the period:
\begin{equation*}
T=\frac{8k^3P^{3}\pi}{K\rho^{2}}.
\end{equation*}

Consequently, for any:
\begin{equation*}
\rho>2kP^2\sqrt{k},
\end{equation*}
taking:
$$
K(\rho)=\frac{4k^3P^4}{\rho^2}\in(0,1),
$$
Equation \eqref{eq.29a} has the $T=2\pi/P$-periodic solution:

\begin{equation}\label{eq.34c}
\begin{gathered}
\theta(\rho,\tau)=\frac{1}{2}\cos^{-1}\left(\frac{\sqrt{\rho^4-16k^6P^8}}{\rho^2}\cos(P\tau)\right)\\
\Delta(\rho,\tau)=\pi-\cot^{-1}\left(\frac{\sqrt{\rho^4-16k^6P^8}}{4k^3P^4}\sin(P\tau)\right).
\end{gathered}
\end{equation}

\section{Melnikov Analysis for Periodic Oscillations}\label{sec4}
Writing \eqref{eq.29} as:
$$
\begin{array}{rrl}
\dis \theta '+\frac{\rho ^{2}}{16k^3P^{3}}\sin 2\theta \sin 2\Delta=\epsilon\rho^{-1}g_1(\rho,\theta,\Delta,\tau)\\ [1.5ex]
\dis \Delta '-\frac{\rho ^{2}}{4k^3P^{3}}\cos 2\theta \sin ^{2}\Delta=\epsilon\rho^{-1}g_2(\rho,\theta,\Delta,\tau)\\ [1.5ex]
\dis \rho '=\epsilon g_3(\rho,\theta,\Delta,\tau)
\end{array}
$$
and using~\cite{F}, (3.5.11), p. 111, with $\alpha=0$, and~\cite{H}, Lemma 2.5, p. 283, we compute the Melnikov~function:
$$
M(\beta_1,\rho)=\left(M_{1}(\beta_{1},\rho),M_{2}(\beta_{1},\rho)\right)
$$
as:
\begin{equation}\label{m1}
\begin{gathered}
M_{1}(\beta_{1},\rho)=\int_{0}^{T}\left(\frac{\partial I}{\partial \theta}\rho^{-1}g_{1}+\frac{\partial I}{\partial \Delta}\rho^{-1}g_{2}-\frac{\partial I}{\partial \theta}\frac{\partial \theta}{\partial \rho}g_{3}-\frac{\partial I}{\partial \Delta}\frac{\partial \Delta}{\partial \rho}g_{3}\right)d\tau\\
=\int_{0}^{T}\left(\frac{\partial I}{\partial \theta}\rho^{-1}g_{1}+\frac{\partial I}{\partial \Delta}\rho^{-1}g_{2}\right)d\tau -\frac{dK}{d\rho}\int_{0}^{T}g_{3}d\tau
\end{gathered}
\end{equation}
since differentiating by $\rho$ the identity:
$$
I(\theta(\rho,\tau),\Delta(\rho,\tau))=K(\rho),
$$
we get:
$$
\frac{\partial I}{\partial \theta}\frac{\partial \theta}{\partial \rho}+\frac{\partial I}{\partial \Delta}\frac{\partial \Delta}{\partial \rho}=\frac{dK}{d\rho},
$$
which is independent of $\tau$, and:
\begin{equation}\label{m2}
M_{2}(\beta_{1},\rho)=\int_{0}^{T}g_{3}d\tau.
\end{equation}

Formulas \eqref{m1} and \eqref{m2} are similar to~\cite{WH}, (2.7). We are looking for a simple zero of $M$, which is equivalent to considering:
\begin{equation*}
\begin{gathered}
\bar M(\beta_1,\rho)=\left(\bar M_{1}(\beta_{1},\rho),\bar M_{2}(\beta_{1},\rho)\right)\\
\bar M_{1}(\beta_{1},\rho)=\int_{0}^{T}\left(\frac{\partial I}{\partial \theta}g_{1}+\frac{\partial I}{\partial \Delta}g_{2}\right)d\tau,\quad \bar M_2(\beta_1,\rho)=\int_{0}^{T}g_{3}d\tau.
\end{gathered}
\end{equation*}

Since:
$$
g_i=\mu_1g_{i1}+\mu_2g_{i2},\quad i=1,2,3
$$
for:
$$
\begin{array}{l}
\dis g_{11}(\rho,\theta,\Delta,\tau)=-\cos \theta\cos (P\tau)\cos (kP\tau +\beta _{1})\\ [1.5ex]
\dis g_{12}(\rho,\theta,\Delta,\tau)=\sin\theta\cos(P\tau)\cos (kP\tau +\Delta +\beta _{1})\\ [1.5ex]
\dis g_{21}(\rho,\theta,\Delta,\tau)=-\frac{\cos (P\tau)}{\sin \theta}\sin (kP\tau +\beta _{1})\\ [1.5ex]
\dis g_{22}(\rho,\theta,\Delta,\tau)=\frac{\cos (P\tau)}{\cos \theta}\sin (kP\tau +\Delta +\beta _{1})\\ [1.5ex]
\dis g_{31}(\rho,\theta,\Delta,\tau)=-\sin \theta \cos(P\tau)\cos (kP\tau +\beta _{1})\\ [1.5ex]
\dis g_{32}(\rho,\theta,\Delta,\tau)=-\cos\theta \cos(P\tau)\cos (kP\tau +\Delta +\beta _{1})\\
\end{array}
$$
we get:
$$
\bar M_i(\beta_1,\rho)=\mu_1\bar M_{i1}(\beta_1,\rho)+\mu_2\bar M_{i2}(\beta_1,\rho),\quad i=1,2,
$$
for
$$
\bar M_{1j}(\beta_1,\rho)=\int_{0}^{T}\left(\frac{\partial I}{\partial \theta}g_{1j}+\frac{\partial I}{\partial \Delta}g_{2j}\right)d\tau,\quad \bar M_{2j}(\beta_1,\rho)=\int_{0}^{T}g_{3j}d\tau,\quad j=1,2.
$$

To solve:
$$
\mu_1\bar M_{i1}(\beta_1,\rho)+\mu_2\bar M_{i2}(\beta_1,\rho)=0,\quad i=1,2,
$$
we first solve the scalar equation:
\begin{equation}\label{m4}
\tilde M(\beta_1,\rho)=\bar M_{11}(\beta_1,\rho)\bar M_{22}(\beta_1,\rho)-\bar M_{12}(\beta_1,\rho)\bar M_{21}(\beta_1,\rho)=0
\end{equation}
to get its root $\beta_{1,0}$ and $\rho_{0}$. Then, we look for $\mu_{1,0}$ and $\mu_{2,0}$ with $\mu_{1,0}^2+\mu_{2,0}^2=1$ such that:
\begin{equation}\label{con1}
\begin{gathered}
\mu_{1,0}\bar M_{i1}(\beta_{1,0},\rho_0)+\mu_{2,0}\bar M_{i2}(\beta_{1,0},\rho_0)=0,\quad i=1,2\\
\det\begin{pmatrix}\mu_{1,0}\nabla\bar M_{11}(\beta_{1,0},\rho_0)^\top+\mu_{2,0}\nabla\bar M_{12}(\beta_{1,0},\rho_0)^\top\\
         \mu_{1,0}\nabla\bar M_{21}(\beta_{1,0},\rho_0)^\top+\mu_{2,0}\nabla\bar M_{22}(\beta_{1,0},\rho_0)^\top\end{pmatrix}\ne0.
\end{gathered}
\end{equation}

Summarizing, we have the following result.
\begin{Theorem}\label{th1}
If there are $\beta_{1,0}\in[0,2\pi)$, $\rho_{0}$ satisfying \eqref{m4}, $\mu_{1,0}$ and $\mu_{2,0}$ with $\mu_{1,0}^2+\mu_{2,0}^2=1$ solving \eqref{con1}, then for any $\mu_1$ near $\mu_{1,0}$ and $\mu_2$ near $\mu_{2,0}$ with $\mu_{1}^2+\mu_{2}^2=1$ and $\epsilon\ne0$ small, there are $\beta_1(\epsilon)$ near $\beta_{1,0}$ and $\rho(\epsilon)$ near $\rho_{0}$ such that \eqref{eq.29} with $\beta_1=\beta_1(\epsilon)$ and $\rho=\rho(\epsilon)$ has a $T=2\pi/P$-periodic solution near \eqref{eq.34c} with $\rho=\rho(\epsilon)$.
\end{Theorem}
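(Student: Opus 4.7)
The plan is to apply the Melnikov persistence theorem for $T$-periodic orbits of an integrable unperturbed system with a first integral, in the form given by the cited references F, (3.5.11) and H, Lemma~2.5. The unperturbed system \eqref{eq.29a} decouples into the trivial equation $\rho'=0$ and the two-dimensional flow \eqref{eq.32} on each $\rho$-slice, which admits the first integral $I=\sin 2\theta\sin\Delta$; the corresponding periodic orbits are, for $\rho>2kP^2\sqrt{k}$, synchronized to period $T=2\pi/P$, forming the two-parameter family \eqref{eq.34c} labelled by $\rho$ and the free phase $\beta_1$. The general theory then reduces persistence of a member of this family under the $\epsilon$-perturbation to solving a bifurcation equation $F_\epsilon(\beta_1,\rho)=0$ on $\R^2$ with $F_\epsilon=\epsilon M(\beta_1,\rho)+O(\epsilon^2)$, where $M=(M_1,M_2)$ is the Melnikov vector of \eqref{m1}--\eqref{m2}.

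First I would note that $M$ and $\bar M$ are related by an invertible linear change of coordinates on $\R^2$ with coefficients depending on $\rho$: at any common zero, the Jacobians with respect to $(\beta_1,\rho)$ are conjugate by this invertible matrix, so simple zeros of $M$ coincide with simple zeros of $\bar M$. It therefore suffices to work with the latter. Using the linearity $g_i=\mu_1 g_{i1}+\mu_2 g_{i2}$ of the perturbation in the forcing amplitudes, the Melnikov components split as $\bar M_i=\mu_1\bar M_{i1}+\mu_2\bar M_{i2}$, so $\bar M=0$ becomes a $2\times 2$ homogeneous linear system in $(\mu_1,\mu_2)$. A nontrivial unit-norm solution exists iff the determinant \eqref{m4} vanishes; the assumed root $(\beta_{1,0},\rho_0)$ thus yields a one-dimensional kernel and, up to sign, a unique unit vector $(\mu_{1,0},\mu_{2,0})$, recovering the first line of \eqref{con1}.

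Second, I would treat $(\mu_1,\mu_2)$ as parameters ranging on the unit circle, locally parameterized near $(\mu_{1,0},\mu_{2,0})$ by a single real variable. At the base point $(\beta_{1,0},\rho_0,\mu_{1,0},\mu_{2,0})$ one has $\bar M=0$ and, by the determinant hypothesis in \eqref{con1}, the Jacobian $D_{(\beta_1,\rho)}\bar M$ is invertible. The implicit function theorem then provides a smooth branch of zeros $(\beta_1^*(\mu_1,\mu_2),\rho^*(\mu_1,\mu_2))$ of the Melnikov map for all $(\mu_1,\mu_2)$ near $(\mu_{1,0},\mu_{2,0})$ on the unit circle. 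Treating $\epsilon$ as an additional parameter and noting that $F_\epsilon/\epsilon=\bar M+O(\epsilon)$ has the same nondegenerate linearization at the base point, a second IFT argument produces $\beta_1(\epsilon),\rho(\epsilon)$ close to $(\beta_1^*,\rho^*)$ with $F_\epsilon=0$, i.e., a genuine $T$-periodic solution of \eqref{eq.29} lying $O(\epsilon)$-close to the unperturbed orbit \eqref{eq.34c}.

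The main obstacle I anticipate is the careful derivation of $F_\epsilon$ and the identification of its leading order with the integrals \eqref{m1}--\eqref{m2} in the presence of the singular factors $\rho^{-1}$, $(\sin\theta)^{-1}$ and $(\cos\theta)^{-1}$ appearing in $g_1,g_2$. These are harmless along the unperturbed orbit \eqref{eq.34c} because $\rho>2kP^2\sqrt{k}>0$ and, for $0<\theta_0<\pi/4$, the variable $\theta$ stays strictly inside $(0,\pi/2)$, so the entire analysis takes place in the regular region and the cited Melnikov framework of F and H applies locally. A second delicate point is confirming that the quoted formula for $M_1$ with its $-(dK/d\rho)\int g_3\,d\tau$ correction — coming from the drift of $\rho$ along perturbed trajectories — is indeed the correct Melnikov function for this three-dimensional slow-fast setting with one first integral and one slowly varying drift variable.
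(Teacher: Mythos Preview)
Your proposal is correct and follows essentially the same approach as the paper: the theorem is stated there as a summary of the preceding Melnikov construction (citing \cite{F}, (3.5.11), and \cite{H}, Lemma~2.5), with no separate formal proof, and you have faithfully reconstructed and slightly expanded that argument, including the reduction from $M$ to $\bar M$, the linear dependence on $(\mu_1,\mu_2)$ leading to \eqref{m4}, and the implicit function theorem step at a simple zero.
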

Note:
$$
\begin{gathered}
\bar M_{11}(\beta_1,\rho)=-\int_0^T\Big(2\cos(2\theta(\rho,\tau))\sin\Delta(\rho,\tau)\cos \theta(\rho,\tau)\cos (P\tau)\cos (kP\tau +\beta _{1})\\
+2\cos\theta(\rho,\tau)\cos\Delta(\rho,\tau)\cos (P\tau)\sin (kP\tau +\beta _{1}))\Big)d\tau\\
\bar M_{12}(\beta_1,\rho)=\int_0^T\Big(2\cos(2\theta(\rho,\tau))\sin\Delta(\rho,\tau)\sin\theta(\rho,\tau)\cos(P\tau)\cos (kP\tau +\Delta(\rho,\tau) +\beta _{1})\\
+2\sin\theta(\rho,\tau)\cos\Delta(\rho,\tau)\cos (P\tau)\sin (kP\tau +\Delta(\rho,\tau) +\beta _{1})\Big)d\tau\\
\bar M_{21}(\beta_1,\rho)=-\int_0^T\sin \theta(\rho,\tau) \cos(P\tau)\cos (kP\tau +\beta _{1})d\tau\\
\bar M_{22}(\beta_1,\rho)=-\int_0^T\cos\theta(\rho,\tau) \cos(P\tau)\cos (kP\tau +\Delta(\rho,\tau) +\beta _{1})d\tau.
\end{gathered}
$$

Next, taking $\rho\to\infty$ in \eqref{eq.34c}, we obtain:
\begin{equation}\label{eq.34d}
\begin{gathered}
\theta(\infty,\tau)=\begin{cases}\dis\frac{P\tau}{2}\quad \tau\in\left[0,\frac{\pi}{P}\right],\\ \\[-2ex]
               \dis\pi-\frac{P\tau}{2}\quad \tau\in\left[\frac{\pi}{P},\frac{2\pi}{P}\right],\end{cases}\\
\Delta(\infty,\tau)=\begin{cases}\dis\pi\quad \tau\in\left(0,\frac{\pi}{P}\right),\\ \\[-2ex]
               \dis0\quad \tau\in\left(\frac{\pi}{P},\frac{2\pi}{P}\right),\\ \\[-2ex]
               \dis\frac{\pi}{2}\quad \tau\in\left\{0,\frac{\pi}{P},\frac{2\pi}{P}\right\}, \end{cases}\\
\frac{\partial\theta}{\partial\rho}(\infty,\tau)=0,\quad \frac{\partial\Delta}{\partial\rho}(\infty,\tau)=0.
\end{gathered}
\end{equation}

Hence:
$$
\begin{gathered}
\bar M_{11}(\beta_1,\infty)=\int_0^{\frac{2\pi}{P}}2\cos\frac{P\tau}{2}\cos (P\tau)\sin(kP\tau +\beta _{1})d\tau\\
=\frac{16k(4k^2-5)\cos\beta_1}{(16 k^4-40 k^2+9)P}\\
\bar M_{12}(\beta_1,\infty)=\int_0^{\frac{2\pi}{P}}2\sin\frac{P\tau}{2}\cos (P\tau)\sin (kP\tau+\beta _{1})\Big)d\tau\\
=-\frac{8(4 k^2+3)\sin\beta_1}{(16 k^4-40 k^2+9)P}\\
\bar M_{21}(\beta_1,\infty)=-\int_0^{\frac{2\pi}{P}}\sin\frac{P\tau}{2}\cos(P\tau)\cos (kP\tau +\beta _{1})d\tau\\
=\frac{4(4 k^2+3)\cos\beta_1}{(16 k^4-40 k^2+9)P}\\
\bar M_{22}(\beta_1,\infty)=\int_0^{\frac{2\pi}{P}}\cos\frac{P\tau}{2}\cos(P\tau)\cos (kP\tau+\beta _{1})d\tau\\
=-\frac{8k(4 k^2-5)\sin\beta_1}{(16k^4-40 k^2+9)P}.
\end{gathered}
$$

Then, \eqref{m4} gives as $\rho\to\infty$,
$$
\tilde M(\beta_1,\infty)=-\frac{16\sin2\beta_1}{(4k^2-9)P^2}
$$
with asymptotic solutions $\beta_{1,0}^\infty$ satisfying either $\sin\beta_{1,0}^\infty=0$ or $\cos\beta_{1,0}^\infty=0$. The asymptotic equation of~\eqref{con1} is as follows:
$$
\begin{gathered}
0=\mu_{1,0}\bar M_{11}(\beta_{1,0}^\infty,\infty)+\mu_{2,0}\bar M_{12}(\beta_{1,0}^\infty,\infty)\\ \\[-2ex]
=\mu_{1,0}\frac{16k(4k^2-5)\cos\beta_{1,0}^\infty}{(16 k^4-40 k^2+9)P}-\mu_{2,0}\frac{8(4 k^2+3)\sin\beta_{1,0}^\infty}{(16 k^4-40 k^2+9)P},
\end{gathered}
$$
which has solutions: $\mu_{1,0}^\infty=0$ and $\mu_{2,0}^\infty=\pm1$ when $\sin\beta_{1,0}^\infty=0$ or $\mu_{1,0}^\infty=\pm1$ and $\mu_{2,0}^\infty=0$ when $\cos\beta_{1,0}^\infty=0$. However, $\frac{\partial \bar M_{ij}}{\partial\rho}(\beta_1,\infty)=0$ for $i,j\in\{1,2\}$, so Theorem~\ref{th1} cannot be applied directly. Note we consider just the first asymptotic equation of \eqref{con1}, since the second one is a scalar multiple of the first~one.

On the other hand, following the method of~\cite{F}, p. 111, we get the following result.

\begin{Corollary}\label{co1}
For any $\rho>0$ sufficiently large and $\epsilon\ne0$ sufficiently small, there is $\mu_1(\rho,\epsilon)$, $\mu_2(\rho,\epsilon)$ with $\mu_1^2(\rho,\epsilon)+\mu_2^2(\rho,\epsilon)=1$ and $\beta_1(\rho,\epsilon)$ such that \eqref{eq.29} has a $T$-periodic solution near \eqref{eq.34d} for $\epsilon\ne0$ small and either $\mu_1(\infty,0)=0$, $\mu_2(\infty,0)=\pm1$, $\sin\beta_1(\infty,0)=0$ or $\mu_1(\infty,0)=\pm1$, $\mu_2(\infty,0)=0$, $\cos\beta_1(\infty,0)=0$.
\end{Corollary}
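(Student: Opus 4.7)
The plan is to exploit the fact that Theorem~\ref{th1} fails at $\rho=\infty$ only because the Jacobian in the $\rho$-direction degenerates, not because the bifurcation equation itself fails. Since $\rho$ is a slowly varying variable (it is a first integral of the unperturbed \eqref{eq.29a}), the natural remedy is to treat $\rho$ as a parameter rather than as an unknown in the bifurcation equation, and to let the unit vector $(\mu_1,\mu_2)$ play the role of the second unknown alongside $\beta_1$. This is precisely the $\alpha=0$ setting of \cite{F}, p.~111.

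Concretely, I would parametrize $\mu_1=\cos\alpha$, $\mu_2=\sin\alpha$ and define
$$
F_i(\beta_1,\alpha,\rho):=\cos\alpha\,\bar M_{i1}(\beta_1,\rho)+\sin\alpha\,\bar M_{i2}(\beta_1,\rho),\quad i=1,2.
$$
Setting $\sigma=1/\rho$ gives a $C^1$ extension $F_i(\beta_1,\alpha,\sigma)$ at $\sigma=0$ using the explicit limits $\bar M_{ij}(\beta_1,\infty)$ computed in the excerpt. The first step is to read off the zeros at $\sigma=0$: the system $F_1=F_2=0$ at $\rho=\infty$ reduces, after substituting the closed-form expressions for $\bar M_{ij}(\beta_1,\infty)$, to the pair of trigonometric equations already derived, whose solutions are exactly $(\sin\beta_1^\infty=0,\cos\alpha^\infty=0)$ and $(\cos\beta_1^\infty=0,\sin\alpha^\infty=0)$, corresponding to the two cases in the statement.

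The second step is to verify that the Jacobian $\partial(F_1,F_2)/\partial(\beta_1,\alpha)$ is nonzero at each of these asymptotic roots. Because $\bar M_{ij}(\beta_1,\infty)$ depends on $\beta_1$ only through $\sin\beta_1$ or $\cos\beta_1$ (with a nonzero constant prefactor involving $k$ and $P$), differentiating in $\beta_1$ simply exchanges sines and cosines, while differentiating in $\alpha$ exchanges the first and second columns of the $(\bar M_{ij})$ matrix up to sign. A short calculation at, say, $(\beta_1^\infty,\alpha^\infty)=(0,\pi/2)$ produces a Jacobian proportional to the product of the two nonzero prefactors $16k(4k^2-5)$ and $8(4k^2+3)$ divided by $(16k^4-40k^2+9)^2P^2$, which is nonzero for generic $k$. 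The analogous computation handles the other asymptotic root.

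Once the Jacobian is nonzero, the implicit function theorem applied to $F(\beta_1,\alpha,\sigma)=0$ in the variables $(\beta_1,\alpha)$ yields continuous branches $\beta_1(\sigma)$, $\alpha(\sigma)$ for $\sigma$ near $0$, which is precisely the assertion at $\epsilon=0$ for $\rho$ sufficiently large. The final step is to upgrade this to $\epsilon\ne0$ small: for each fixed large $\rho$, the standard averaging/Melnikov persistence argument of~\cite{F}, (3.5.11), applied in the two unknowns $(\beta_1,\alpha)$ with the nondegeneracy established above, produces continuous functions $\beta_1(\rho,\epsilon)$, $\mu_i(\rho,\epsilon)$ reducing at $\epsilon=0$ to the branches just constructed. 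The main obstacle I expect is bookkeeping the combined implicit function argument so that the estimates are uniform in $\rho$ as $\rho\to\infty$; this is handled cleanly by working in the compactified variable $\sigma=1/\rho$ so that $(\sigma,\epsilon)=(0,0)$ is a single regular point of the extended bifurcation map.
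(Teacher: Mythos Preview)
Your approach is essentially the same as the paper's: parametrize $(\mu_1,\mu_2)$ by an angle, treat $\rho$ as a free parameter instead of an unknown, locate the asymptotic roots at $\rho=\infty$, verify nondegeneracy of the $2\times2$ Jacobian in $(\beta_1,\text{angle})$, and apply the implicit function theorem. One small correction: the Jacobian determinant at $(\beta_1,\alpha)=(0,\pi/2)$ is not simply the product of $16k(4k^2-5)$ and $8(4k^2+3)$, but rather the combination $32(4k^2+3)^2-128k^2(4k^2-5)^2$ over $(16k^4-40k^2+9)^2P^2$, which simplifies to $-32/((4k^2-9)P^2)$ and is nonzero for every natural number $k$, so your ``generic $k$'' caveat is unnecessary.
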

\begin{proof}
The bifurcation equation has the form (see~\cite{F}, p. 111):
$$
\mu_1\bar M_{i1}(\beta_1,\rho)+\mu_2\bar M_{i2}(\beta_1,\rho)=O(\epsilon),\quad i=1,2,\quad \mu_1^2+\mu_2^2=1
$$
i.e.,
\begin{equation}\label{bif1}
\sin\Gamma\bar M_{i1}(\beta_1,\rho)+\cos\Gamma\bar M_{i2}(\beta_1,\rho)=O(\epsilon),\quad i=1,2
\end{equation}
for $\mu_1=\sin\Gamma$ and $\mu_2=\cos\Gamma$. For $\rho=\infty$ and $\epsilon=0$, \eqref{bif1} takes the form:
\begin{equation}\label{bif2}
\begin{gathered}
\frac{16k(4k^2-5)}{(16 k^4-40 k^2+9)P}\sin\Gamma\cos\beta_1-\frac{8(4 k^2+3)}{(16 k^4-40 k^2+9)P}\cos\Gamma\sin\beta_1=0\\
\frac{4(4 k^2+3)\cos\beta_1}{(16 k^4-40 k^2+9)P}\sin\Gamma\cos\beta_1-\frac{8k(4 k^2-5)}{(16k^4-40 k^2+9)P}\cos\Gamma\sin\beta_1=0.
\end{gathered}
\end{equation}

The determinant of \eqref{bif2} is $-\frac{32}{(4k^2-9)P^2}\ne0$, so \eqref{bif2} has the only solutions:
$$
\sin\Gamma_\infty=\sin\beta_{1,0}^\infty=0,\quad \cos\Gamma_\infty=\cos\beta_{1,0}^\infty=0.
$$

The determinants of Jacobians of \eqref{bif2} at these zeros are as follows:
$$
\begin{gathered}
\det\begin{pmatrix} \frac{16k(4k^2-5)}{(16 k^4-40 k^2+9)P}\cos\Gamma_\infty\cos\beta_{1,0}^\infty&-\frac{8(4 k^2+3)}{(16 k^4-40 k^2+9)P}\cos\Gamma_\infty\cos\beta_{1,0}^\infty\\
\frac{4(4 k^2+3)\cos\beta_{1,0}^\infty}{(16 k^4-40 k^2+9)P}\cos\Gamma_\infty\cos\beta_{1,0}^\infty &-\frac{8k(4 k^2-5)}{(16k^4-40 k^2+9)P}\cos\Gamma_\infty\cos\beta_{1,0}^\infty
\end{pmatrix}\\
=-\frac{32\cos^2\Gamma_\infty\cos^2\beta_{1,0}^\infty}{(4k^2-9)P^2}\ne0
\end{gathered}
$$
and:
$$
\begin{gathered}
\det\begin{pmatrix} \frac{8(4 k^2+3)}{(16 k^4-40 k^2+9)P}\sin\Gamma_\infty\sin\beta_{1,0}^\infty&-\frac{16k(4k^2-5)}{(16 k^4-40 k^2+9)P}\sin\Gamma_\infty\sin\beta_{1,0}^\infty&\\
\frac{8k(4 k^2-5)}{(16k^4-40 k^2+9)P}\sin\Gamma_\infty\sin\beta_{1,0}^\infty&-\frac{4(4 k^2+3)\cos\beta_{1,0}^\infty}{(16 k^4-40 k^2+9)P}\sin\Gamma_\infty\sin\beta_{1,0}^\infty
\end{pmatrix}\\
=\frac{32\sin^2\Gamma_\infty\sin^2\beta_{1,0}^\infty}{(4k^2-9)P^2}\ne0,
\end{gathered}
$$
respectively. Hence, the zeroes $\Gamma_\infty$ and $\beta_{1,0}^\infty$ are simple, so we can apply the implicit function theorem to get the result. The proof is complete.
\end{proof}

\section{Discussion}\label{sec5}

Melnikov analysis is applied for the persistence of periodic oscillations for periodically-perturbed systems of ODEs with a slowly-varying variable. The ODEs are obtained from a model of a finite lattice of particles coupled by linear springs under distributed harmonic excitation, which presents a low-energy nonlinear acoustic vacuum (see~\cite{vakakis1}), but we consider just two modes. We extend the study of~\cite{vakakis1} to a problem with small exciting harmonic forces. We apply an analytical method based on derivation of Melnikov functions and then on the location of their simple roots. Melnikov functions are derived by using the approach of~\cite{F,WH} developed for slowly-varying ODEs. Since the Melnikov functions are rather complicated, we follow an asymptotic way for solving the corresponding Melnikov equations. It~would be nice to solve these Melnikov equations numerically for finding other simple roots, which is postponed to our next research. These roots determine and locate periodic solutions of the periodically-perturbed systems of ODEs derived from the two-mode low-energy nonlinear acoustic vacuum system. Moreover, our next investigation will be also to consider higher numbers of modes represented by a system of ODEs in \eqref{eq.6c}. The method used will be different from that in this paper, since it will be based on the results of Section 3.3 of~\cite{F}. Note that higher modes of \eqref{eq.3} were numerically studied in~\cite{ZMSBV}, which is another challenge for our study.\\

Authors Contributions: The authors contributed equally to this work. 

Funding:M.A. is supported by the National Scholarship Programme of the Slovak Republic for the Support of Mobility of Students, Ph.D. Students, University Teachers, Researchers and Artists. M.F. and M.P. are supported by the Slovak Research and Development Agency (Grant Number APVV-14-0378) and the Slovak Grant Agency VEGA
 (Grant Numbers
2/0153/16 and 1/0078/17).



\end{document}